\renewcommand{\vec}[1]{\boldsymbol{#1}}
\renewcommand{\@biblabel}[1]{[#1]\hfill}
\let\NAT@parse\undefined
\newcommand{\newmarkedtheorem}[1]{%
  \newenvironment{#1}
    {\pushQED{\oprocend}\csname inner@#1\endcsname}
    {\popQED\csname endinner@#1\endcsname}%
  \newtheorem{inner@#1}%
}
\newenvironment{sketchofproof}{\proof}{\endproof}
\def\th@plain{%
  \thm@notefont{}% same as heading font
  \normalfont % body font
}
\def\th@definition{%
  \thm@notefont{}% same as heading font
  \normalfont % body font
}
\newcommand{\oprocendsymbol}{{$\bullet$}} 					% symbol to end Theorem/Lemmata/...
\def\oprocend {{%        set up
\parfillskip=0pt        % so \par doesnt push \square to left
\widowpenalty=10000     % so we dont break the page before \square
\displaywidowpenalty=10000  % ditto
\finalhyphendemerits=0  % TeXbook exercise 14.32
%
%                 horizontal
\leavevmode             % \nobreak means lines not pages
\unskip                 % remove previous space or glue
\nobreak                % don’t break lines
\hfil                   % ragged right if we spill over
\penalty50              % discouragement to do so
\hskip.45em              % ensure some space
\null                   % anchor following \hfill
\hfill                  % push \square to right
\oprocendsymbol%              % the end-of-proof mark
%
%                   vertical
\par}}                  % build paragraph
\title{{\LARGE \textbf
Characterizing the learning dynamics in extremum seeking} \footnote{This paper is an extended version of \cite{Wildhagen18}.} \\
\large \normalfont The role of gradient averaging and non-convexity
}
\begin{document}

% date
\date{}
% AUTHORS AND AFFILIATIONS
\author{Stefan Wildhagen}
\author{Simon Michalowsky}
\author{Jan Feiling}
\author[1]{Christian Ebenbauer}
\affil[1]{Institute for Systems Theory and Automatic Control, University of Stuttgart, Germany \protect\\ \texttt{\small $\lbrace$wildhagen,michalowsky,feiling,ce$\rbrace$@ist.uni-stuttgart.de}}

\maketitle

%%%%%%%%%%%%%%%%%%%%%%%%%%%%%%%%%%%%%%%%%%%%%%%%%%%%%%%%%%%%%%%%%%%%%%%%%%%%%%%%
\begin{abstract}
\textbf{Abstract.} We consider perturbation-based extremum seeking, which recovers an approximate gradient of an analytically unknown objective function through measurements. Using classical needle variation analysis, we are able to explicitly quantify the recovered gradient in the scalar case. We reveal that it corresponds to an averaged gradient of the objective function, even for very general extremum seeking systems. From this, we create a recursion which represents the learning dynamics along the recovered gradient. These results give rise to the interpretation that extremum seeking actually optimizes a function other than the original one. From this insight, a new perspective on global optimization of functions with local extrema emerges: because the gradient is averaged over a certain time period, local extrema might be evened out in the learning dynamics. Moreover, a multidimensional extension of the scalar results is given.
\end{abstract}

%%%%%%%%%%%%%%%%%%%%%%%%%%%%%%%%%%%%%%%%%%%%%%%%%%%%%%%%%%%%%%%%%%%%%%%%%%%%%%%%%

\section{Introduction}

Extremum seeking (ES) is a well-established technique to find the extremal value of an unknown mapping \mbox{$F(\vec{x})$}. The quantity $\vec{x}$ corresponds to an input, which needs to be adjusted in such a way that it converges to a minimum of the objective function $F$. The map including an adjustment policy of $x$, such that it converges to the minimizer, is called \textit{extremum seeking system}. The only information available to this system consists of measurements of the objective function, whereas its analytical expression is unknown. As a consequence, gradient-based techniques are not suitable for this kind of problems. The idea behind ES is to recover an approximate gradient on a slower time scale by exciting the system via periodic dither signals on a faster time scale. The slow movement of the system along the recovered gradient of $F$ is referred to as \textit{learning dynamics} (LD).

Former research in this topic focused mainly on the analysis of stability of such ES schemes, whereas a quantification of this recovered gradient was given little attention. The first local stability proof of a special ES scheme was given in \cite{Krstic00}, and later extended to non-local stability by \cite{Tan06}, where it was also noted that ES approximates the gradient of $F$. A novel view on ES systems was introduced by \cite{Duerr13,Duerr17}, who also showed stability for their setup. Common to these results is that they establish so-called practical stability of the feedback scheme with respect to its parameters. This notion implies that, for convex objective functions, there exists for any set of initial conditions a choice of parameters of the ES controller, such that the ES system converges to an arbitrarily small neighborhood of a minimizer.

As mentioned above, these results are able to explain convergence to the minimizer when considering convex functions. However, global convergence in the presence of local extrema is a topic that has been rarely addressed in former research. Both \cite{Tan09} and \cite{Nesic13} presented adapted ES schemes which, under some additional (and possibly hard to verify) assumptions, achieve practical global stability with respect to their parameters, despite the local extrema.

However, it has been observed that also standard ES schemes achieve global optimization for a certain choice of parameters \cite{Tan06}. For a setup as used in \cite{Duerr13}, the LD pass through a local minimum if the frequency $\omega$ of the dither signals is chosen rather low. In \cite{Michalowsky16}, an explicit description of the recovered gradient for a particular scalar system within this setup, treated with needle-shaped dithers, was given. This result gave rise to the interpretation that the LD can be described by a gradient descent on a function other than $F$, which is parametrized by $\omega$ and will be called $L_\omega$. From this emerges the following question regarding the observation mentioned above: does $L_\omega$ become globally convex for certain parameter choices $\omega$, although $F$ is a non-convex function, such that the LD converge to the global minimizer?

The main contribution of this paper is to give an explicit description of the recovered gradient and thus the LD for general ES systems, which is valid for any parameter choice. This result contributes to the existing theory in three different ways.

First, the paper extends the results of \cite{Michalowsky16} to the use of general dithers and vector fields as considered in \cite{Grushkovskaya17}, and to multidimensional ES. We show that for all of these extensions, the LD approximately move along an \textit{averaged} gradient of $F$, as it is the case for needle-shaped inputs considered in \cite{Michalowsky16}. This is equivalent to viewing the recovered gradient as an averaged gradient of $F$.

Second, we extend the theory of ES as introduced by \cite{Duerr13}, by stating an explicit discrete-time recursion describing the LD of ES systems in this structure. Furthermore, our results explicitly state the learning descent direction with an uncertainty depending on $\omega$, whereas the existing theory did not give an interpretable estimate of this direction.

Third, we provide a new perspective on the analysis of global convergence of ES with non-convex maps: we do not aim to prove practical global stability of a new algorithm under some possibly restrictive assumptions; instead, the explicit quantification of the LD will allow us to analyze the global LD of such systems for any given parameters, and to examine when and why the LD follow the gradient of a convex function, despite $F$ being non-convex.

The paper will be organized as follows. In Section \ref{methods}, we introduce some notation and necessary background in state-transition matrices and variational calculus. Section \ref{mainresults} contains the problem statement and the theoretic main results, while a simulative verification of these statements is given in Section \ref{simulations}. Our work is summarized in Section \ref{conclusion}. This article is an extended version of \cite{Wildhagen18} and augments its contents by an introduction to variational calculus for multidimensional systems, a characterization of the recovered gradient for the multidimensional case, and an additional numerical example.

\section{Preliminaries and Methods} \label{methods}

\subsection{Notation}
We denote by $\mathbb{R}$ the set of real numbers and by $\mathbb{N}_+$ the set of positive natural numbers other than $0$. Let $\mathcal{C}^p$ denote the set of $p$ times continuously differentiable functions. We denote $\nabla f = \begin{bmatrix}
\frac{\partial f}{\partial x_1}& \dots & \frac{\partial f}{\partial x_n}\end{bmatrix}^\top$ the gradient of a function $f:\mathbb{R}^n\rightarrow\mathbb{R}, \; f\in \mathcal{C}^1$. We consider the Landau notation, i.e., for $f,g:\mathbb{R}^n\rightarrow\mathbb{R}$ write $f(\vec{x})=\mathcal{O}(g(\vec{x}))$, meaning that there exist some $M>0$ and $\delta>0$ such that $||f(\vec{x})||\le M||g(\vec{x})||$ for all $||\vec{x}||\le \delta$. Note that when \mbox{$f_1=\mathcal{O}(g)$} and $f_2=\mathcal{O}(g)$ with $M,\delta$, it also holds that \mbox{$f_1+f_2=\mathcal{O}(2g)$} with $M,\delta$, because of the triangle inequality. We denote $\vec{0}_i$ a vector in $\mathbb{R}^i$ where each entry is $0$, $\vec{I}$ the unit matrix, $a_i$ the $i$-th entry of a vector $\vec{a}$, and $B_i$ the $i$-th diagonal entry of a square matrix $\vec{B}$.

\subsection{State-Transition Matrix}

For a linear, time-varying system $\dot{\vec{x}}(t)=\vec{A}(t)\vec{x}(t)$, \mbox{$\vec{x}(t)\in\mathbb{R}^n$}, the state-transition matrix (STM) relates the solutions at different time points $\vec{x}(t)=\vec{\Phi}(t,t_0)\vec{x}(t_0)$. If $n=1$ and $A$ is locally integrable, it is defined by
\begin{equation}
\Phi(t,t_0) = \exp\left(\int_{t_0}^{t} A(\tau) \mathrm{d}\tau\right).
\end{equation}
Important properties are the so-called semi-group property $\vec{\Phi}(t,t_0)=\vec{\Phi}(t,t_1)\vec{\Phi}(t_1,t_0)$ and the differentiation property
$\frac{\mathrm{d}}{\mathrm{d} t_0}\vec{\Phi}(t,t_0) = -\vec{\Phi}(t,t_0)\vec{A}(t_0)$.

\subsection{Variational Calculus}\label{VariationalCalculus}

This section presents well-established results on the effect of so-called needle variations on the trajectories of dynamical systems, adapted to our argumentations. An exhaustive treatment of needle variations can be found in \cite{Liberzon11}.

\subsubsection{General Perturbation}

Consider the nonlinear system
\begin{equation}
\dot{\vec{x}}(t) = g_1(\vec{x}(t)) \vec{u}_1(t) + g_2(\vec{x}(t)) \vec{u}_2(t) \label{system_var}
\end{equation}
with $\vec{x}(t)\in\mathbb{R}^n$, $g_{1},g_{2}:\mathbb{R}^n \rightarrow \mathbb{R}$ and $\vec{u}_{1},\vec{u}_{2}:\mathbb{R} \rightarrow \mathbb{R}^n$. Suppose that $g_{1},g_{2} \in \mathcal{C}^1$ and $\vec{u}_{1},\vec{u}_{2}$ are piecewise continuous, such that local existence and uniqueness of the solution of \eqref{system_var} is guaranteed.

Denote $\vec{x}^*(t)$ the solution of \eqref{system_var}, when some nominal input trajectory $\vec{u}_1(t)=\vec{u}_1^*(t)$, and $\vec{u}_2(t)=\vec{u}_2^*(t)$ is applied to the system. We call $\vec{x}^*(t)$ the nominal solution of \eqref{system_var}. Next, we study the effects on the solution of \eqref{system_var}, when the nominal inputs are perturbed by a so-called needle or Pontryagin-McShane variation. We consider a perturbation in $\vec{u}_2$ only, such that the perturbed input is defined by
\begin{equation}
\vec{u}_1(t) = \vec{u}_1^*(t), \quad \vec{u}_2(t) =
\begin{cases}
\vec{u}_2^*(t) & t \notin [\bar{t},\bar{t}+\epsilon] \\
\vec{\alpha} & t \in [\bar{t},\bar{t}+\epsilon]
\end{cases}, \label{input_pert}
\end{equation}
where $\vec{\alpha}= [ \alpha_1,\ldots,\alpha_n ]^\top \in \mathbb{R}^n$ and $\bar{t},\epsilon>0$.
\begin{figure}
	\centering
	\begin{tikzpicture}[>=latex]
	%%%%%%%%%%%%%%%%%%%%%%%%%%%%%%%%%%%%%%%%%%%%%%%%%%%%%%%%%%%%%%%%%%%%%%%%%%%%%%%%%%%%%%%%%%%%%%%%%%%%%%%%%%%%%
	% N E E D L E   V A R I A T I O N S   O F   T H E   I N P U T S
	% draw coordinate axis
	\draw[->] (0,0) -- (2.5,0);
	\draw[->] (0,0) -- (0,2.6);
	% draw variation
	\draw[very thick,green!60!black] (-0.1,0.01) -- (1,0.01);
	\draw[green!60!black,dashed] (1,0.01) --  (1,2.3);
	\draw[very thick,green!60!black] (1,2.3) -- (1.3,2.3);
	\draw[green!60!black,dashed] (1.3,2.3) -- (1.3,0.01);
	\draw[very thick,green!60!black] (1.3,0.01) -- (2.3,0.01); 
	% draw tick at alpha
	\draw[] (-0.1,2.3) -- node[pos=1,anchor=east] {$\alpha_i$} (0,2.3);
	% label variation
	\node[anchor=west,green!60!black] at (1.3,2.2) {$ u_{2i}(t) $};
	% draw optimal u_2^* = 0
	\draw[dashed,ultra thick,orange] (-0.1,0.01) -- (2.3,0.01);
	% label optimal u_2
	\node[orange,anchor=south] at (2,0.01) {$ u_{2i}^*(t) $};
	% draw tick at needle
	\draw[] (1,0.1) -- node[pos=1,anchor=north east,xshift=0.4em] {$\bar{t}$} (1,-0.1);
	\draw[] (1.3,0.1) -- node[pos=1,anchor=north,xshift=0.6em] {$\bar{t}+\epsilon$} (1.3,-0.1);
	%%%%%%%%%%%%%%%%%%%%%%%%%%%%%%%%%%%%%%%%%%%%%%%%%%%%%%%%%%%%%%%%%%%%%%%%%%%%%%%%%%%%%%%%%%%%%%%%%%%%%%%%%%%%%
	% E F F E C T S    O N    T H E    T R A J E C T O R I E S
	% draw starting point
	\node[name=x0,draw,circle,fill=black,inner sep=0pt,minimum width=4pt,label=135:$\vec{x}_0$] at (3.8,0.3) {};
	% get point where the needle starts
	\path[] (x0) to [out=-10,in=-155] node[draw=black,circle,fill=black,inner sep=0pt,minimum width=4pt,pos=1,name=startNeedle] {} ++(1.5,0.7) {};
	\node[anchor=south east] at (startNeedle) {$\vec{x}^*(\bar{t})$};
	% draw trajectory after needle
	\draw[green!60!black,thick] (x0) to [out=-10,in=-155] (startNeedle.center) 
	                                 to [bend right=15] node[pos=1,fill=black,name=endNeedle,draw=black,circle,inner sep=0pt,minimum width=4pt] {} ++(0.3,0.7) 
	                                 to [out=180-157,in=-173] node[anchor=south west] {$\vec{x}(t)$} ++(2,0.3) ;
	\draw[green!60!black,thick] (x0) to [out=-10,in=-155] (startNeedle.center) 
	                                 to [bend right=15] ++(0.3,0.7) 
	                                 to [out=180-157,in=-173] node[anchor=south west] {$\vec{x}(t)$} ++(2,0.3) ;
	\node[anchor=south east] at (endNeedle) {$\vec{x}(\bar{t}+\epsilon)$};
	% draw optimal trajectory
	\draw[orange,dashed,thick] (x0) to [out=-10,in=-155] (startNeedle)  
	                                to [out=180-155,in=-170] node[anchor=south west] {$\vec{x}^*(t)$} ++(2.5,-0.3);
	
\end{tikzpicture}
	\caption{The needle variation (left) and the nominal and perturbed trajectories (right).}
	\label{needleVar}
\end{figure}
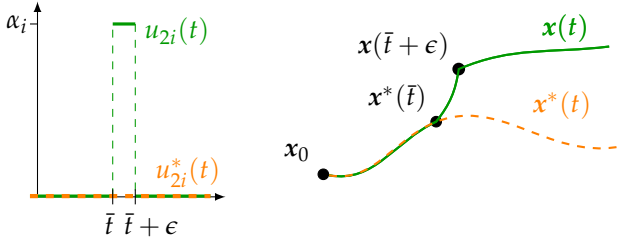
As illustrated in Fig. \ref{needleVar}, the nominal input is perturbed over an interval of length $\epsilon$, starting at $\bar{t}$, and held on some constant value $\vec{\alpha}$ in this time period. The solution of \eqref{system_var}, when the perturbed input \eqref{input_pert} is applied, will be denoted by $\vec{x}(t)$. As Fig. \ref{needleVar} suggests, $\vec{x}(t)$ will deviate from $\vec{x}^*(t)$ on $[\bar{t},\bar{t}+\epsilon]$, and then ``run parallel'' to the nominal solution in the following. By several Taylor expansions (see \cite{Liberzon11} for details), one obtains that the perturbed trajectory evolves according to
\begin{equation}
\vec{x}(t) = \vec{x}^*(t) + \epsilon \vec{v}(t) + \mathcal{O}(\epsilon^2), \quad \forall t \ge \bar{t}+\epsilon. \label{perturbed_sol}
\end{equation}
The quantity $\vec{v}(t) \in \mathbb{R}^n$ is the so-called variational variable, which evolves according to the variational equation
\begin{equation}
\dot{\vec{v}}(t) = \underbrace{
	\vec{u}_1^*(t) \nabla g_1(\vec{x}^*(t))^\top}_{\eqqcolon \vec{A}(\vec{x}^*(t),t)} \label{vareq_var}
\vec{v}(t), \quad \forall t \ge \bar{t}+\epsilon,
\end{equation}
and when $\vec{u}_2^*(\bar{t}+\epsilon)=\vec{0}$, has initial condition
\begin{equation}
\vec{v}(\bar{t}+\epsilon) = g_{2}(\vec{x}^*(\bar{t}+\epsilon))\vec{\alpha}. \label{var_IC}
\end{equation}
Note that the variational equation \eqref{vareq_var} is equivalent to a linearization of \eqref{system_var} around $\vec{x}^*(t)$, $\vec{u}_1^*(t)$, and $\vec{u}_2^*(t)$.

\subsubsection{Perturbation and Nominal Input in Single Dimension} \label{SpecPertInp}

Consider the perturbed input \eqref{input_pert} with
\begin{equation}
\vec{\alpha} = [ \vec{0}_{\ell-1} , \alpha_{\ell} , \vec{0}_{n-\ell} ]^\top, \quad \ell \in \{1,\ldots,n\},
\end{equation}
i.e., the perturbation acts only in the single dimension $\ell$. Then, the initial condition of the variational variable reads from \eqref{var_IC}
\begin{equation}
\vec{v}(\bar{t}+\epsilon) = [\vec{0}_{\ell-1} , g_{2}(\vec{x}^*(\bar{t}+\epsilon))\alpha_{\ell} , \vec{0}_{n-\ell}]^\top. \label{IC_sing}
\end{equation}
Additionally, let the nominal input $u_{1i}^*(t) = 0$ for all $i \neq \ell$ in some time period $t \in [\bar{t}+\epsilon,t_f]$. 
From the variational equation \eqref{vareq_var}, it follows that $\dot{v}_i(t)= 0$ for all $i\neq \ell$ for $t \in [\bar{t}+\epsilon,t_f]$, such that these components do not show any dynamic behavior. Furthermore, $v_i(\bar{t}+\epsilon)=0$ for all $i\neq \ell$ from \eqref{IC_sing}, such that it holds
\begin{equation}
v_i(t)=0 \quad \forall i\neq \ell, \quad t \in[\bar{t}+\epsilon,t_f].
\end{equation}
As a result, they have no effect on the component $v_{\ell}(t)$ and we obtain
\begin{align}
\dot{v}_{\ell}(t) = u_{1\ell}^*(t) \frac{\partial g_{1}}{\partial x_\ell}(\vec{x}^*(t)) v_{\ell}(t), \quad t \in[\bar{t}+\epsilon,t_f].
\end{align}
This will be of importance in the proof of Theorem \ref{TheoremSI}.

\section{Main Results} \label{mainresults}

\subsection{Scalar Extremum Seeking}

Extremum seeking (ES) offers a systematic approach to address the optimization problem
\begin{equation}
\min F(x), \label{optprob}
\end{equation}
with $x\in\mathbb{R}$, the nonlinear map $F: \mathbb{R}\rightarrow\mathbb{R}$, $F\in \mathcal{C}^1$, without any gradient information being available.
In the following, we consider the class of ES systems introduced by \cite{Duerr13}
\begin{equation}
\dot{x}(t) = g_1(F(x(t)))u_1(t) + g_2(F(x(t)))u_2(t), \; x(0) = x_0, \label{system}
\end{equation}
where $g_1,g_2: \mathbb{R}\rightarrow\mathbb{R}$, $g_1,g_2\in \mathcal{C}^1$. In \cite{Duerr17}, it was discussed that, apart from technical differences, the setups considered in \cite{Krstic00,Tan06} can be represented by this class as well. The dither functions $u_1,u_2: \mathbb{R}\rightarrow\mathbb{R}$ are assumed to be $T$-periodic with $T=\frac{2\pi}{\omega}$. Note that in the following, both $\omega$ and $T$ will be used, although they contain the same information. A typical approach is to choose the dithers
\begin{equation}
u_1(t) = \sqrt{\omega}\sin(\omega t), \quad u_2(t) = \sqrt{\omega}\cos(\omega t), \label{sincos}
\end{equation}
and the vector fields $g_1,g_2$ such that their Lie bracket
\begin{equation}
[g_1,g_2](F) = \frac{\partial g_2(F)}{\partial F}g_1(F)-\frac{\partial g_1(F)}{\partial F}g_2(F) \eqqcolon - g_0(F) \label{LieBr}
\end{equation}
is $1$. Then for $\omega\rightarrow\infty$, the trajectories of \eqref{system} follow those of the gradient flow system $\dot{\bar{x}}=-\frac{1}{2}\nabla F(\bar{x})$ arbitrarily close \cite{Duerr13,Grushkovskaya17}. As a result, $x$ converges to a solution of \eqref{optprob}.

Apparently, $\omega\rightarrow\infty$ can never be achieved in practice. However, also for finite $\omega$, $\eqref{system}$ is observed to move on average along a recovered gradient of $F$. This movement is referred to as learning dynamics (LD). Since the recovered gradient is not exact, the LD generally minimize a function other than $F$, called $L_\omega$. The LD can be carved out from $x(t)$ by neglecting the periodic oscillations induced by the dithers, i.e., by regarding only the system state at $T$-multiples $x(kT), \;k\in\mathbb{N}_+$. Then, the LD of \eqref{system} can be described as a gradient descent recursion on $L_\omega$ with fixed step size
\begin{equation}
x(kT) = x((k-1)T) + \nabla L_\omega (x((k-1)T)). \label{grad_descent}
\end{equation}
As detailed above, the LD have been found to move along $\nabla F$ for $\omega\rightarrow\infty$, whereas for finite $\omega$, the LD were not explicitly quantified so far. The main purpose of this paper is thus to give an explicit description of the gradient descent direction $\nabla L_\omega$, valid for any $\omega$. Thereby, the LD and the function $L_\omega$, that the LD effectively minimize, are characterized.

\begin{figure}[h]
	\centering
	%\def\N{40}						% number of needles
%\FPeval{\eps}{2*pi/{\N}}			% width of one needle
\def\T{3.1416}				% frequency of sinusoidal signal
\colorlet{u1Color}{orange}			% color for u1
\begin{tikzpicture}[>=latex]
	\begin{axis}[
			width=\columnwidth,
			height=5.5cm,
			xmin=0, 
			xmax=6.2832,
			ymin=-1.3, 
			ymax=1.3,
			disabledatascaling,
			xlabel=Time $t$,
			ylabel=$u_j (t)$,
			xtick={0,0.3142,6.28},
			xticklabels={ $0$,$\epsilon$,$T$  },
			ytick={-1,0,1},
			yticklabels={$-1$,$0$,$1$,},
			ylabel style={yshift=-0.3cm},
			legend style={at={(0.55,0.97)},anchor=north west,nodes=right},
			grid=none]
		% plot u1
		\addplot[thick,color=u1Color,opacity=0.9,domain=0:6.2832,samples=100] { sin(1*deg(x)) }; 
		\addlegendentry{$u_1$};
		% plot u2: smooth version
		\addplot[thick,color=green!70!black,opacity=0.9,domain=0:6.2832,samples=100] { cos(1*deg(x)) };
		\addlegendentry{$u_2$};
		% plot u2bar: approximation
		\addplot[thick,color=blue,opacity=0.9,const plot mark right,domain=0:6.2832,samples=21] { cos(1*deg(x)) };
		\addlegendentry{$\bar{u}_2$};
		
		\addplot[ultra thin,color=blue,opacity=0,const plot,fill=blue,fill opacity=0.2] coordinates {(0,0.9511) (0.3142,0.9511)} \closedcycle;
		\addplot[ultra thin,color=blue,opacity=0,const plot,fill=blue,fill opacity=0.2] coordinates {(\T+0,-0.9511) (\T+0.3142,-0.9511)} \closedcycle;
		\addplot[ultra thin,color=red,opacity=0,const plot,fill=red,fill opacity=0.2] coordinates {(0.3142,0.8090) (0.6283,0.8090)} \closedcycle;
		\addplot[ultra thin,color=red,opacity=0,const plot,fill=red,fill opacity=0.2] coordinates {(\T+0.3142,-0.8090) (\T+0.6283,-0.8090)} \closedcycle;
		\addplot[ultra thin,color=lime,opacity=0,const plot,fill=lime,fill opacity=0.2] coordinates {(0.6283,0.5878) (0.9425,0.5878)} \closedcycle;
		\addplot[ultra thin,color=lime,opacity=0,const plot,fill=lime,fill opacity=0.2] coordinates {(\T+0.6283,-0.5878) (\T+0.9425,-0.5878)} \closedcycle;
		\addplot[ultra thin,color=black,opacity=0,const plot,fill=black,fill opacity=0.2] coordinates {(0.9425,0.3090) (1.2566,0.3090)} \closedcycle;
		\addplot[ultra thin,color=black,opacity=0,const plot,fill=black,fill opacity=0.2] coordinates {(\T+0.9425,-0.3090) (\T+1.2566,-0.3090)} \closedcycle;
		\addplot[ultra thin,color=cyan,opacity=0,const plot,fill=cyan,fill opacity=0.2] coordinates {(1.2566,0) (1.5708,0)} \closedcycle;
		\addplot[ultra thin,color=cyan,opacity=0,const plot,fill=cyan,fill opacity=0.2] coordinates {(\T+1.2566,0) (\T+1.5708,0)} \closedcycle;
		\addplot[ultra thin,color=magenta,opacity=0,const plot,fill=magenta,fill opacity=0.2] coordinates {(\T+1.5708,0.3090) (\T+1.8850,0.3090)} \closedcycle;
		\addplot[ultra thin,color=magenta,opacity=0,const plot,fill=magenta,fill opacity=0.2] coordinates {(1.5708,-0.3090) (1.8850,-0.3090)} \closedcycle;
		\addplot[ultra thin,color=brown,opacity=0,const plot,fill=brown,fill opacity=0.2] coordinates {(1.8850,-0.5878) (2.1991,-0.5878)} \closedcycle;
		\addplot[ultra thin,color=brown,opacity=0,const plot,fill=brown,fill opacity=0.2] coordinates {(\T+1.8850,0.5878) (\T+2.1991,0.5878)} \closedcycle;
		\addplot[ultra thin,color=teal,opacity=0,const plot,fill=teal,fill opacity=0.2] coordinates {(2.1991,-0.8090) (2.5133,-0.8090)} \closedcycle;
		\addplot[ultra thin,color=teal,opacity=0,const plot,fill=teal,fill opacity=0.2] coordinates {(\T+2.1991,0.8090) (\T+2.5133,0.8090)} \closedcycle;
		\addplot[ultra thin,color=violet,opacity=0,const plot,fill=violet,fill opacity=0.2] coordinates {(2.5133,-0.9511) (2.8274,-0.9511)} \closedcycle;
		\addplot[ultra thin,color=violet,opacity=0,const plot,fill=violet,fill opacity=0.2] coordinates {(\T+2.5133,0.9511) (\T+2.8274,0.9511)} \closedcycle;
		\addplot[ultra thin,color=green,opacity=0,const plot,fill=green,fill opacity=0.2] coordinates {(2.8274,-1) (3.1416,-1)} \closedcycle;
		\addplot[ultra thin,color=green,opacity=0,const plot,fill=green,fill opacity=0.2] coordinates {(\T+2.8274,1) (\T+3.1416,1)} \closedcycle;
			
		%\addplot[very thick,color=green!70!black,opacity=0.4,fill=green!70!black!10,domain=0:6.2832,samples=41, ybar interval] { cos(1*deg(x+0.1571)) };
		% coordinate axis
		\draw[-] (axis cs: -1,0) -- (axis cs: 6.28,0);
		\draw[-] (axis cs: 0,-1.5) -- (axis cs: 0,1.5);
	\end{axis} 
\end{tikzpicture}
	\caption{Trigonometric dithers $u_1(t)$ and $u_2(t)$ which are compliant to \textit{A1}-\textit{A3}. Note that the individual needles in the sampled dither $\bar{u}_2(t)$ form needle pairs of opposite sign (illustrated by matching colors).}
	\label{u1u2}
\end{figure}
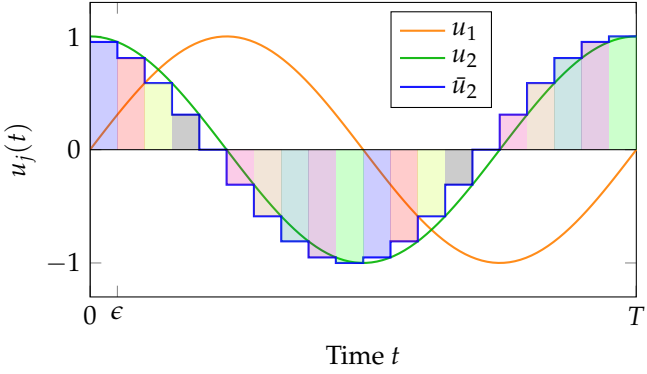

We assume that the following holds for the dither functions:

\begin{enumerate}
	\item[\textit{A1:}] $u_1,u_2$ are piecewise continuous and bounded.
	\item[\textit{A2:}] The function $u_1$ is point-symmetric to $(\frac{T}{2},0)$, i.e., it holds that $u_1(t)=-u_1(T-t)$ for all $t \in [0,T]$.
	\item[\textit{A3:}] For $u_2$ it holds $u_2(t)=-u_2(\frac{T}{2}+t)$ for all $t \in [0,\frac{T}{2}]$.
\end{enumerate}

\begin{remark}
	Note that \textit{A2} and \textit{A3} imply that $u_1$ and $u_2$ have zero mean on $[0,T]$.
\end{remark}

\begin{remark}
	We presume that \textit{A1}-\textit{A3} are mild conditions for dithers commonly considered in ES. For example, they are fulfilled by the well-known trigonometric dithers \eqref{sincos}, but also by square-wave or sawtooth dithers proposed in \cite{Tan08}.
\end{remark}

The idea is now to sample and approximate the dither $u_2$ by needle-shaped functions. We restrict the sampling interval $\epsilon$, i.e., the length of the individual needles, to even divisors of $T$. This means that if the sampling interval is $\epsilon = \frac{T}{2N}, \; N\in \mathbb{N_+}$,
then $u_2(t)$ is approximated by $2N$ needles in the interval $[0,T)$. The sampled dither function is thus
\begin{equation}
\bar{u}_2(t)=u_2(i\epsilon), \; t\in[(i-1)\epsilon, i\epsilon), \; i=1,\ldots,N. \label{u2GU}
\end{equation}
Because of \textit{A3} and the even number of samples, for every needle in the time interval $[0,\frac{T}{2})$, there is a corresponding needle with same amplitude but opposite sign in $[\frac{T}{2},T)$, such that we can extract ``needle pairs'' out of $\bar{u}_2(t)$. This is illustrated in Fig. \ref{u1u2}, where the same-colored areas form needle pairs of opposite sign. This fact will become crucial in order to establish Theorem \ref{TheoremGU}.

The following theorem explicitly relates the solutions of \eqref{system} at times $t=0$ and $t=T$, and thereby quantifies the gradient recovered by ES.

\begin{theorem}
	Suppose that Assumptions \textit{A1}-\textit{A3} on the dither functions $u_1,u_2$ hold. Let $x^*(t)$ denote the solution of \eqref{system}, when $u_1(t)$ and $u_2(t) \equiv 0$ are applied, i.e., $x^*(t)$ fulfills
	\begin{equation}
	\dot{x}^*(t)=g_1(F(x^*(t)))u_1(t), \: x^*(0)=x_0. \label{nomsystem_GU}
	\end{equation}
	Assume that $x^*(t)$ exists on $[0,T]$.	Let $\Phi(t,t_0)$ be the STM corresponding to the time-varying variational equation
	\begin{equation}
	\dot{v}(t)=u_1(t)\frac{\partial g_1}{\partial F}(F(x^*(t)))\frac{\partial F}{\partial x}(x^*(t)) v(t) \label{vareqGU}
	\end{equation}
	with initial time $t_0$, and let $g_1,g_2$ satisfy \eqref{LieBr}. Consider system \eqref{system}, where $u_1(t)$, and $\bar{u}_2(t)$ as in \eqref{u2GU}, are applied. Then
	\begin{align}
	&x(T) = x_0 + \mathcal{O}(T^2) \label{xTNfinGU} \\
	&\hspace{-3pt}+ \epsilon\hspace{-1pt}\sum_{i=1}^{N} u_2(i\epsilon)\hspace{-6pt}\int\displaylimits_{i\epsilon}^{\frac{T}{2}-i\epsilon}\hspace{-2pt} \frac{\partial F}{\partial x}(x^*(\tau)) \Phi(0,\tau) u_1(\tau) g_0(F(x^*(\tau))) \mathrm{d}\tau. \nonumber
	\end{align}
	Moreover,
	\begin{align}
	&\lim_{N \rightarrow \infty} x(T) = x_0 + \mathcal{O}(T^2) \label{xTNinfGU}\\
	&+ \int\displaylimits_{0}^{\frac{T}{2}} u_2(t) \hspace{-4pt} \int\displaylimits_{t}^{\frac{T}{2}-t} \hspace{-1pt} \frac{\partial F}{\partial x}(x^*(\tau)) \Phi(0,\tau) u_1(\tau) g_0(F(x^*(\tau))) \mathrm{d}\tau \mathrm{d}t. \nonumber
	\end{align}
	\label{TheoremGU}
\end{theorem}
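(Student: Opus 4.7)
The plan is to view $\bar u_2$ as the concatenation of $2N$ independent needle variations of length $\epsilon$ applied on top of the nominal trajectory $x^*$ of \eqref{nomsystem_GU}, propagate each needle to time $T$ via the variational calculus of Section~\ref{VariationalCalculus}, and then exploit Assumptions \textit{A2}-\textit{A3} to pair the needles on $[0,T/2)$ with their counterparts on $[T/2,T)$. For the $i$-th needle, which takes the constant value $u_2(i\epsilon)$ on $[(i-1)\epsilon,i\epsilon)$, the initial condition \eqref{var_IC} reads $v_i(i\epsilon)=g_2(F(x^*(i\epsilon)))\,u_2(i\epsilon)$, and the STM $\Phi$ of \eqref{vareqGU} propagates it to $v_i(T)=\Phi(T,i\epsilon)\,v_i(i\epsilon)$. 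Iterating \eqref{perturbed_sol} over all $2N$ needles and collecting both the per-needle and the $O(N^2)$ cross-needle second-order terms into $\mathcal O(T^2)$ gives
\begin{equation*}
x(T)=x^*(T)+\epsilon\sum_{i=1}^{2N}\Phi(T,i\epsilon)\,g_2(F(x^*(i\epsilon)))\,u_2(i\epsilon)+\mathcal O(T^2).
\end{equation*}

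The next step extracts the benefits of \textit{A2}-\textit{A3}. From $u_1(t)=-u_1(T-t)$, the nominal system \eqref{nomsystem_GU} is time-reversed about $T/2$, so $x^*$ is reflection-symmetric about $T/2$ and in particular $x^*(T)=x_0$, which accounts for the first term in \eqref{xTNfinGU}. The same reflection makes the variational coefficient $A(\tau):=u_1(\tau)\tfrac{\partial g_1}{\partial F}(F(x^*(\tau)))\tfrac{\partial F}{\partial x}(x^*(\tau))$ antisymmetric about $T/2$, so that $\int_t^T A=-\int_0^t A$ and consequently $\Phi(T,t)=\Phi(0,t)$ as well as $\Phi(T,T/2+s)=\Phi(0,T/2-s)$. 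Combining this with $u_2(T/2+t)=-u_2(t)$ from \textit{A3}, needle $i$ and needle $N+i$ collapse into the single expression
\begin{equation*}
\epsilon\,u_2(i\epsilon)\bigl[h(i\epsilon)-h(T/2-i\epsilon)\bigr],\qquad h(s):=\Phi(0,s)\,g_2(F(x^*(s))).
\end{equation*}

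The final step applies $h(i\epsilon)-h(T/2-i\epsilon)=-\int_{i\epsilon}^{T/2-i\epsilon}h'(s)\,ds$ and reads off $h'(s)$ via the product rule together with the differentiation property $\tfrac{d}{ds}\Phi(0,s)=-\Phi(0,s)A(s)$ and $\dot x^*(s)=g_1(F(x^*(s)))u_1(s)$. The two resulting terms combine precisely into the Lie bracket \eqref{LieBr}, yielding $h'(s)=-u_1(s)\tfrac{\partial F}{\partial x}(x^*(s))\,\Phi(0,s)\,g_0(F(x^*(s)))$, which upon summation over $i=1,\ldots,N$ delivers \eqref{xTNfinGU}. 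The limit \eqref{xTNinfGU} then follows by recognizing the outer sum as a Riemann sum of mesh $\epsilon=T/(2N)$ for $u_2$ against the inner integral on $[0,T/2]$. I expect the main difficulty to be the remainder bookkeeping: one must justify that the second-order contributions from $2N$ overlapping needle perturbations accumulate to only $\mathcal O(T^2)$ uniformly in $N$, rather than degrading as the mesh is refined.
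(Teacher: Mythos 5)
Your proposal is correct and follows essentially the same route as the paper's proof: needle-by-needle propagation via the STM with the second-order remainders (including the cross-needle terms arising because each successive needle is applied on top of an already-perturbed trajectory) summed to $\mathcal{O}((2N)^2\epsilon^2)=\mathcal{O}(T^2)$, the symmetry identities $x^*(T)=x_0$, $\Phi(T,t)=\Phi(0,t)$, $\Phi(T,\tfrac{T}{2}+s)=\Phi(0,\tfrac{T}{2}-s)$ from \textit{A2}, the pairing of needles $i$ and $N+i$ via \textit{A3} to produce $-\int h'$, the Lie-bracket simplification of $h'$, and the final Riemann-sum limit. The difficulty you flag at the end is precisely what the paper resolves with its auxiliary Lemma~\ref{LemmaMultNeed} and the explicit bookkeeping $R_i(\epsilon)=\mathcal{O}(\epsilon^2)$, $S_i(\epsilon)=\mathcal{O}(2(i-1)\epsilon^2)$.
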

The proof can be found in Appendix \ref{AppProof1}. Note that \eqref{xTNinfGU} characterizes the solution of \eqref{system} when $u_2(t)$ is applied, since $\lim_{N\rightarrow\infty}\bar{u}_2(t)=u_2(t)$ due to \textit{A1}. A continuation of \eqref{xTNinfGU} gives a gradient descent recursion of the form \eqref{grad_descent}. Because $x^*(t)$ depends only on its initial condition \mbox{$x^*(T_k)=x(T_k)$} (with $T_k=(k-1)T$), $\nabla L_\omega$ is given by \eqref{xTNinfGU} as
\begin{align}
&\nabla L_\omega (x(T_k))  = \mathcal{O}(T^2) \label{grad_approx} \\
&+ \hspace{-8pt}\int\displaylimits_{T_k}^{T_k+\frac{T}{2}}\hspace{-8pt}u_2(t)\hspace{-6pt}\int\displaylimits_{t}^{\frac{T}{2}-t} \frac{\partial F}{\partial x}(x^*(\tau)) \Phi(T_k,\tau) u_1(\tau) g_0(F(x^*(\tau))) \mathrm{d}\tau \mathrm{d}t. \nonumber
\end{align}
Consequently, Theorem \ref{TheoremGU} gives an approximate quantification of the gradient descent direction $\nabla L_\omega$, and thus describes the LD of system \eqref{system}. This result is independent of the parameter $\omega$ or convexity properties of the function $F$.

The theoretical insight into the mechanics of ES that can be gained from \eqref{xTNinfGU} is extremely valuable. It shows that the LD evolve along a weighted averaged gradient of the function $F$, even for very general dithers $u_1,u_2$. The weighting factors of the gradient are $\Phi(0,\tau)$, $u_1(\tau)$ and $g_0(F(x^*(\tau)))$. Note that for many vector fields $g_1,g_2$ commonly considered in ES (e.g. $g_1=F,g_2=1$ or $g_1=\sin(F), g_2=\cos(F)$), it holds that $g_0(F)=1$ \cite{Grushkovskaya17}, such that this factor even vanishes in the integral. The inner integral (which corresponds to the averaged gradient) is then weighted with the second dither $u_2(t)$ and averaged once again. This  observation gives rise to the interpretation that when choosing the input period $T$ large enough, the LD of the ES system might ``even out'' local minima in $F$. As a result, the LD converge to the global minimum instead of getting stuck in local minima. This phenomenon is indeed observed in practice as seen later. 

\begin{remark}
	Standard ES analysis already indicates that the right-hand side of the "average" ES system corresponds to an averaged version of the static objective function (see e.g. \cite{Krstic00,Tan06}). In contrast to our results, however, this method does not address the \textit{explicit} solution of the \textit{original} ES system.
\end{remark}
\begin{remark}
	Note that $\nabla L_\omega (x(T_k))$ in \eqref{grad_approx} depends on $x^*(t)$, the solution of a nonlinear differential equation. Therefore, it cannot be computed directly and an approximate numerical solution must be obtained instead. This fact indeed does not diminish the valuable insight gained from \eqref{grad_approx}.
\end{remark}

\begin{remark}
	Theorem \ref{TheoremGU} also includes the case of needle-shaped inputs treated in \cite{Michalowsky16}. There are only two needles of length $\epsilon$, such that the rest term in \eqref{xTNfinGU} is estimated to $\mathcal{O}(\epsilon^2)$.
\end{remark}

\subsection{\fontdimen2\font=3pt Multidimensional Extremum Seeking with Sequential Dither}

In this section, the characterization of the LD of ES systems is extended to the multidimensional case. Consider the multidimensional system
\begin{equation}
\dot{\vec{x}}(t) = g_1(F(\vec{x}(t)))\vec{u}_1(t) + g_2(F(\vec{x}(t)))\vec{u}_2(t), \; \vec{x}(0)=\vec{x}_0 \label{system_mult},
\end{equation}
with $\vec{x}(t)\in\mathbb{R}^n$, $F:\mathbb{R}^n\rightarrow\mathbb{R}, F\in\mathcal{C}^1$, \mbox{$g_1,g_2: \mathbb{R}\rightarrow\mathbb{R}$} and the $nT$-periodic dither functions \mbox{$\vec{u}_1,\vec{u}_2: \mathbb{R}\rightarrow\mathbb{R}^n$}. Here, we consider the special multidimensional sequence, where a general scalar dither $u_1,u_2$ is applied sequentially in all dimensions, while the other dithers are zero meanwhile. This sequence, shown in Fig. \ref{u1u2mult}, is defined by
\begin{equation}
u_{ji}(t) = \begin{cases}
u_j(t-(i-1)T) & t\in[(i-1)T,iT) \\
0 & \text{else}
\end{cases}, \; \substack{j=1,2 \\ i=1,\ldots,n}, \label{u_mult}
\end{equation}
with $\vec{u}_j(t)=[u_{j1}(t),\ldots,u_{jn}(t)]^\top, \; j=1,2$. Again, the scalar $u_1,u_2$ need to fulfill \textit{A1}-\textit{A3}.
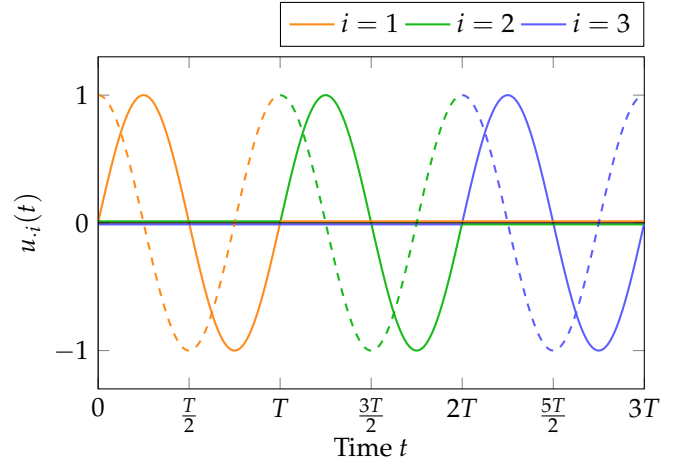
\begin{figure}[h]
	\centering
		\colorlet{1Color}{orange}
	\colorlet{2Color}{green!70!black}
	\colorlet{3Color}{blue!70!white}
	\def\pii{3.1416}
	\begin{tikzpicture}[>=latex]
	\begin{axis}[
	width=\columnwidth,
	height=6cm,
	xmin=0, 
	xmax=18.8496,
	ymin=-1.3, 
	ymax=1.3,
	disabledatascaling,
	xlabel=Time $t$,
	ylabel=$u_{\cdot i}(t)$,
	xtick={0,3.1416,6.2832,9.4248,12.5664,15.708,18.8496},
	xticklabels={ $0$,$\tfrac{T}{2}$,$T$,$\tfrac{3T}{2}$,$2T$,$\tfrac{5T}{2}$,$3T$  },
	ytick={-3,-1,0,1,3},
	yticklabels={$-\alpha$,$-1$,$0$,$1$,$\alpha$},
	ylabel style={yshift=-0.3cm},
	legend style={at={(1,1.03)}, anchor=south east},
	legend columns = 3,
	grid=none,]
	
	% plot u11, u12 and u13
	\addplot[thick,color=1Color,opacity=0.9,domain=0:6.2832,samples=100] { sin(1*deg(x)) };
	\addlegendentry{$i=1$}
	\addplot[thick,color=2Color,opacity=0.9,domain=6.2832:12.5664,samples=100] { sin(1*deg(x)) }; 
	\addlegendentry{$i=2$}
	\addplot[thick,color=3Color,opacity=0.9,domain=12.5664:18.8496,samples=100] { sin(1*deg(x)) }; 
	\addlegendentry{$i=3$}
	
	\addplot[thick,color=1Color,opacity=0.9] coordinates {(6.2832,0.012) (18.8496,0.012)};
	\addplot[thick,color=2Color,opacity=0.9] coordinates {(0,0.012) (6.2832,0.012)};
	\addplot[thick,color=2Color,opacity=0.9] coordinates {(12.5664,-0.012) (18.8496,-0.012)};
	\addplot[thick,color=3Color,opacity=0.9] coordinates {(0,-0.012) (12.5664,-0.012)};
	
	% plot u21, u22 and u23
	\addplot[thick,color=1Color,opacity=0.9,domain=0:6.2832,samples=100,dashed] { cos(1*deg(x)) }; 
	\addplot[thick,color=2Color,opacity=0.9,domain=6.2832:12.5664,samples=100,dashed] { cos(1*deg(x)) }; 
	\addplot[thick,color=3Color,opacity=0.9,domain=12.5664:18.8496,samples=100,dashed] { cos(1*deg(x)) }; 
	
	% coordinate axis
	\draw[-] (axis cs: -1,0) -- (axis cs: 19,0);
	\draw[-] (axis cs: 0,-1.5) -- (axis cs: 0,1.3);
	\end{axis}
	\end{tikzpicture}
	\caption{The multidimensional dither sequences $u_{1i}(t)$ (solid) and $u_{2i}(t)$ (dashed) for $n=3$. The scalar dithers $u_1,u_2$ are merely applied sequentially in all dimensions.}
	\label{u1u2mult}
\end{figure} \\
The following theorem characterizes the LD of system \eqref{system_mult} when treated with this sequential dither.

\begin{theorem}
	Suppose that Assumptions \textit{A1}-\textit{A3} on the scalar dithers $u_1,u_2$ hold. Let $\ell\in\{1,\ldots,n\}$. Denote $\vec{x}^*(t)$ the solution of \eqref{system_mult} when $\vec{u}_1(t)$ is as defined in \eqref{u_mult} and \mbox{$\vec{u}_2(t) \equiv \vec{0}$}, and assume that $\vec{x}^*(t)$ exists on $[0,\ell T]$. Let \mbox{$\Phi_i(t,t_0),\Phi_i:\mathbb{R}\times\mathbb{R}\rightarrow\mathbb{R}$} be the STM corresponding to the time-varying variational equation
	\begin{equation}
	\dot{v}_i(t) = u_{1i}(t)\frac{\partial g_1}{\partial F}(F(\vec{x}^*(t)))\frac{\partial F}{\partial x_i}(\vec{x}^*(t)) v_i(t) \label{vareqSI}
	\end{equation}
	with initial time $t_0$, and let $g_1,g_2$ satisfy \eqref{LieBr}. Consider the multidimensional system \eqref{system_mult}, where $\vec{u}_1(t)$ and $\vec{u}_2(t)$ are applied as defined in \eqref{u_mult}. Then,
	\begin{align}
	&\vec{x}(\ell T) = \vec{x}_0 + \mathcal{O}(\ell^2 T^2) \label{xnTSI} \\
	&+ \begin{bmatrix}
	\hspace{-57pt}\int\displaylimits_{0}^{\frac{T}{2}} u_2(t) \int\displaylimits_{t}^{\frac{T}{2}-t}\Big( \frac{\partial F}{\partial x_1}(\vec{x}^*(\tau)) \Phi_1(0,\tau) \\
	\hspace{100pt} \cdot u_1(\tau) g_0(F(\vec{x}^*(\tau))) \Big) \mathrm{d}\tau \mathrm{d}t \\
	\vdots \\
	\hspace{-10pt}\int\displaylimits_{0}^{\frac{T}{2}} u_2(t) \int\displaylimits_{t}^{\frac{T}{2}-t} \Big( \frac{\partial F}{\partial x_\ell}(\vec{x}^*(T_\ell+\tau)) \Phi_\ell(T_\ell,T_\ell+\tau) \\
	\hspace{79pt} \cdot u_1(\tau) g_0(F(\vec{x}^*(T_\ell+\tau))) \Big) \mathrm{d}\tau \mathrm{d}t \\
	\vec{0}_{n-\ell}
	\end{bmatrix}\hspace{-2pt}. \nonumber
	\end{align}
	\label{TheoremSI}
\end{theorem}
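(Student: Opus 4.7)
The plan is to exploit the sequential structure of the dither \eqref{u_mult}: on each interval $[T_i, T_i+T]$ with $T_i := (i-1)T$, only the $i$-th components of $\vec{u}_1$ and $\vec{u}_2$ are active, so the multidimensional system \eqref{system_mult} effectively decouples into a scalar extremum-seeking problem in the coordinate $x_i$ with all other coordinates frozen. The approach is therefore to reduce each interval to the setting of Theorem \ref{TheoremGU}, apply it $\ell$ times in sequence, and concatenate the resulting scalar identities into the vector identity \eqref{xnTSI}.

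First I would verify the decoupling. Since $u_{1j}(t) = u_{2j}(t) = 0$ for $j \neq i$ and $t \in [T_i, T_i+T)$, $\dot{x}_j(t) = 0$ there; hence only $x_i$ moves during interval $i$, and coordinates with index $j > \ell$ remain at $x_{0,j}$ throughout $[0, \ell T]$, producing the $\vec{0}_{n-\ell}$ block in \eqref{xnTSI}. The time-shifted scalar dithers $u_1(\cdot - T_i)$ and $u_2(\cdot - T_i)$ inherit Assumptions \textit{A1}--\textit{A3} from $u_1, u_2$, so a time-shifted version of Theorem \ref{TheoremGU} applies on each $[T_i, T_i+T]$ with $F$ replaced by the slice $\tilde{F}_i(x_i) := F(x_1^*(T_i), \ldots, x_{i-1}^*(T_i), x_i, x_{0,i+1}, \ldots, x_{0,n})$; note that $\partial \tilde{F}_i/\partial x_i$ coincides with $\partial F/\partial x_i$ along $\vec{x}^*$. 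The single-coordinate specialization of Section \ref{SpecPertInp} then guarantees that a needle perturbation in the $i$-th dimension, against a nominal input active only in the $i$-th dimension, yields a variational variable confined to coordinate $i$ and governed by the scalar STM $\Phi_i$ of \eqref{vareqSI}. This reproduces exactly the $i$-th entry of the vector in \eqref{xnTSI}, modulo an $\mathcal{O}(T^2)$ remainder; telescoping $i = 1, \ldots, \ell$ yields the main identity.

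The main obstacle, in my view, is the error bookkeeping leading to $\mathcal{O}(\ell^2 T^2)$ rather than the naive $\mathcal{O}(\ell T^2)$. Each interval contributes an $\mathcal{O}(T^2)$ remainder which, by the additivity rule recalled in the Notation section, aggregates to $\mathcal{O}(\ell T^2)$; the extra factor of $\ell$ must arise from cross-interval dependence. Concretely, by interval $i$ the perturbed state $\vec{x}(T_i)$ differs from $\vec{x}^*(T_i)$ by $\mathcal{O}(i T^2)$, because each of the $i-1$ prior intervals has contributed an $\mathcal{O}(T^2)$ displacement in a distinct coordinate. Substituting this deviation into $F$, $\partial F/\partial x_i$ and $g_0, g_1, g_2$, all locally Lipschitz on a compact tube around $\vec{x}^*$, perturbs the integrand of interval $i$ by $\mathcal{O}(i T^2)$, and summing $\sum_{i=1}^\ell i \, T^2$ produces the stated $\mathcal{O}(\ell^2 T^2)$ bound. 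Making this Lipschitz argument quantitative on the correct tube is where I would spend the bulk of the technical effort.
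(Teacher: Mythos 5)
Your overall strategy --- exploit the sequential structure of \eqref{u_mult} to freeze all but one coordinate per period, reduce each period to the scalar machinery of Theorem \ref{TheoremGU} via Section \ref{SpecPertInp}, and concatenate --- is the same route the paper takes, and the decoupling and the $\vec{0}_{n-\ell}$ block are handled correctly. The genuine gap is in the error bookkeeping, which you rightly identify as the crux and then get wrong by one order. You claim each period contributes an ``$\mathcal{O}(T^2)$ displacement'' so that $\vec{x}(T_i)-\vec{x}^*(T_i)=\mathcal{O}(iT^2)$. But the displacement over one period is dominated by the first-order learning step --- the double-integral term itself --- which in the paper's accounting is $\epsilon\sum_{j=1}^{2N}(\cdot)=\mathcal{O}(2N\epsilon)=\mathcal{O}(T)$, exactly as in the line $x^{(j)}(t)=x^*(t)+\mathcal{O}((j-1)\epsilon)$ of the proof of Theorem \ref{TheoremGU}. (If the per-period displacement really were $\mathcal{O}(T^2)$, the cross-interval corrections you invoke would be higher order and the remainder would stay at $\mathcal{O}(\ell T^2)$; your premise undercuts the very bound you are trying to establish.) The correct mechanism is that the perturbed trajectory deviates from the \emph{single global} nominal $\vec{x}^*$ by $\mathcal{O}(k\epsilon)$ after $k$ needles, so that Lemma \ref{LemmaMultNeed} applied to needle $k$ produces a remainder $\mathcal{O}(2(k-1)\epsilon^2)$, and summing over all $2N\ell$ needles gives $\mathcal{O}((2N\ell)^2\epsilon^2)=\mathcal{O}(\ell^2T^2)$. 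Your Lipschitz-on-a-tube argument can be repaired to say this, but as written the arithmetic ($\mathcal{O}(iT^2)$ mismatch summed directly to $\mathcal{O}(\ell^2T^2)$) does not track how a perturbation of the integrand propagates into $\vec{x}(\ell T)$ and is internally inconsistent.

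A second, smaller omission: your interval-by-interval telescoping implicitly requires that the nominal trajectory is $T$-periodic, $\vec{x}^*(T_i)=\vec{x}_0$ for every $i$, which follows from \textit{A2} (the nominal flow retraces itself on each half-period); the paper states this as $\vec{\Phi}(T_{i+1},T_i)=\vec{I}$. Without it, your slice $\tilde F_i$ and the associated scalar STM would be anchored at a drifting base point, and the reduction would not reproduce the entries $\Phi_\ell(T_\ell,T_\ell+\tau)$ and $\vec{x}^*(T_\ell+\tau)$ appearing in \eqref{xnTSI}, all of which refer to the one nominal solution started at $\vec{x}_0$. This fact should be stated and proved (it is one line from \textit{A2}), since it is what licenses gluing the $\ell$ scalar applications into the single vector formula.
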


\begin{sketchofproof}
	The proof of Theorem \ref{TheoremSI} follows closely the lines of the proof of Theorem \ref{TheoremGU}. The main idea is again to sample the input functions in dimension $i$ by an even number of needles in the interval $[(i-1)T,iT)$. Consider $\vec{\Phi}(t,t_0)$, the STM corresponding to
	\begin{equation}
	\vec{\dot{v}}(t) = \vec{u}_1(t)\frac{\partial g_1}{\partial F}(F(\vec{x}^*(t))) \nabla F(\vec{x}^*(t))^\top \vec{v}(t).
	\end{equation}
	With the fact that the needles are applied in one dimension at a time, and the principles of variational calculus given in Section \ref{VariationalCalculus}, one can express the solution of \eqref{system_mult} after $\ell T$ as
	\begin{align}
	&\vec{x}(\ell T) = \vec{x}_0 + \mathcal{O}(\ell T^2) \label{xnTSI1}\\
	&+ \epsilon \sum_{i=1}^{\ell} \sum_{j=1}^{2N} u_2(j\epsilon) \vec{\Phi}(\ell T,T_{i}+j\epsilon)
	\begin{bmatrix}
	\vec{0}_{i-1} \\ g_2(\vec{x}^*(T_{i}+j\epsilon)) \\ \vec{0}_{n-i}
	\end{bmatrix}. \nonumber
	\end{align}
	Because of the point-symmetries in $\vec{u}_1(t)$, it holds that $\vec{\Phi}(T_{i+1},T_i)=\vec{I}$ for all $i$. With the semi-group property of the STM, and the results presented in Section \ref{SpecPertInp}, one can replace $\vec{\Phi}(\ell T,T_{i}+j\epsilon)$ in \eqref{xnTSI1} by the scalar $\Phi_i(T_{i+1},T_{i}+j\epsilon)$. Again, we apply the symmetry property of $u_2(t)$ and write the $\mathcal{O}(\epsilon)$ terms as an integral. Letting $\epsilon\rightarrow 0$ proves \eqref{xnTSI}.
\end{sketchofproof}

Theorem \ref{TheoremSI} shows that using the special sequential sequence as defined in \eqref{u_mult}, a component-wise and decoupled weighting and averaging of the gradient is performed. Furthermore, formula \eqref{xnTSI} reveals that the LD move primarily along the dimension where the scalar input was applied (as the integral term indicates). However, due to the non-vanishing remainder, the system moves slightly along the other dimensions as well. 

\section{Numerical Evaluation} \label{simulations}

In this section, we compare the simulated LD $x(kT)_\mathrm{Sim}$ of an ES system with the result of the recursion \eqref{grad_descent} and \eqref{grad_approx}, denoted by $x(kT)_\mathrm{Rec}$. We consider the system from \cite{Duerr13}
\begin{equation}
\dot{x}(t) = F(x(t))\sqrt{\omega}\sin(\omega t) - a\sqrt{\omega}\cos(\omega t), \label{system_example}
\end{equation}
with initial condition $x(0)=1.8$. The parameter $\omega=\frac{2\pi}{T}$ will be adapted to display its various effects.
\begin{figure}[h]
	\centering
	\input{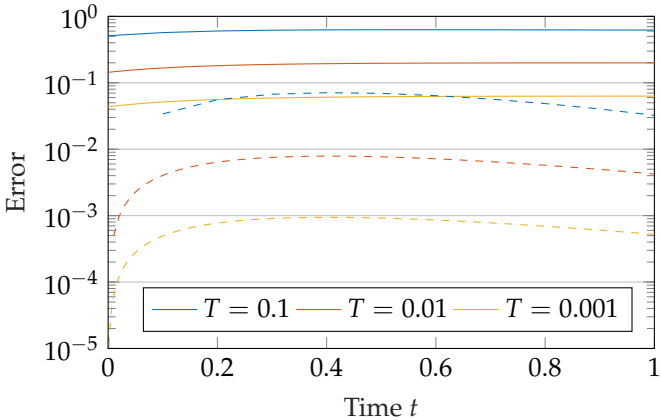}
	\caption{The absolute error between the simulated LD and the recursion (dashed), and the size of the tube around the LD (solid), with $F_1(x)$, $a=5$.}
	\label{error}
\end{figure}
\begin{example}
	Consider \eqref{system_example} together with the quadratic function $F_1(x)=\frac{1}{2}x^2$. Fig. \ref{error} shows the absolute error $|x(kT)_\mathrm{Rec}-x(kT)_\mathrm{Sim}|$ between the simulated LD and the recursion for system \eqref{system_example} for various $T$. The chosen period times differ by a factor of $10$, as do the errors at a certain time point $t$. This observation matches the theory very well, e.g., consider $T_1=10 T_2$. Then using the larger period time $T_1$ leads to an error at time $T_1$ of $\mathcal{O}(T_1^2)$, whereas performing the recursion $10$ times with $T_2$ causes an error at time $T_1$ of $\mathcal{O}(10T_2^2)=\mathcal{O}(0.1T_1^2)$. In \cite{Duerr13}, the LD could be verified to move in a tube around the gradient flow system. Observe from Fig. \ref{error} that the recursion gives a much more accurate estimation of the ES system's LD.
\end{example}
\begin{figure}[h]
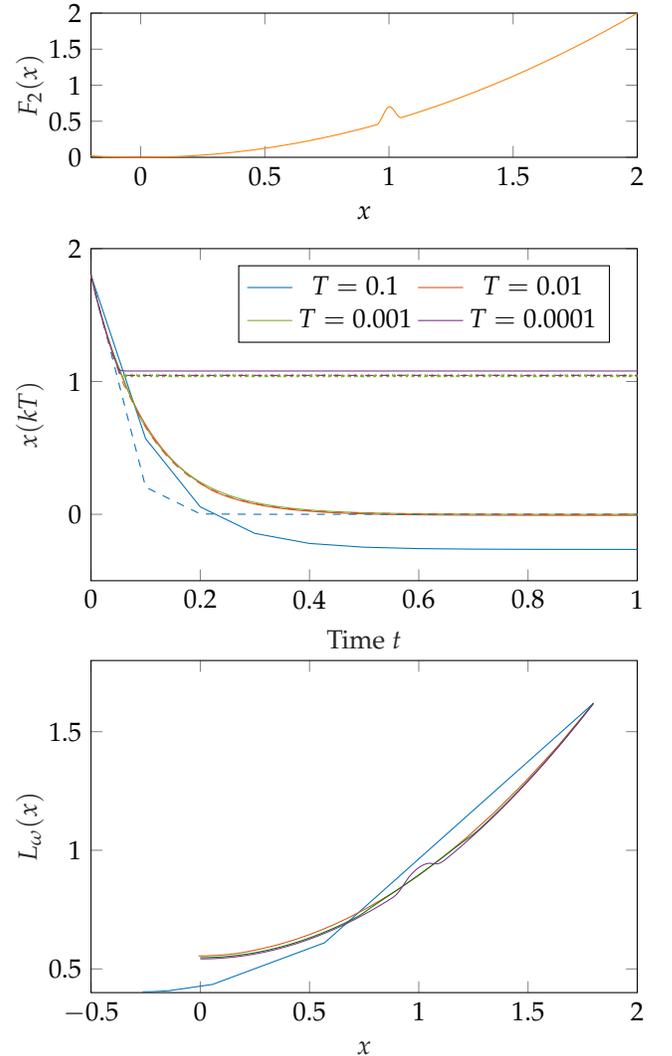

	\centering
	\input{Abbildungen/F.tex}
	\input{Abbildungen/xkT1D.tex}
	\input{Abbildungen/Lomega.tex}
	\caption{The non-convex test function $F_2(x)$ (top), the simulated LD $x(kT)_\mathrm{Sim}$ (solid) and the recursion $x(kT)_\mathrm{Rec}$ (dashed) with $F_2(x)$ (middle), and the simulated $L_\omega(x)$ corresponding to $F_2(x)$ (bottom), $a=20$.}
	\label{xkT1D}
\end{figure}
\begin{example}
	This example shall demonstrate the recursion's ability to represent the simulated LD with non-convex functions. Consider the function $F_2(x)$ as depicted in Fig. \ref{xkT1D} (top). It has a sharp local minimum between $x_0$ and its global minimum at $x=0$, and is a quadratic function elsewhere. In Fig. \ref{xkT1D} (middle), the simulated LD and the recursion are depicted. For $T=0.1,0.01$, both the simulation and the recursion converge globally, and for $T=0.0001$, both get stuck in the local minimum. However, for $T=0.001$, the simulation passes through the local minimum while the recursion gets stuck. We can infer that in borderline cases, when the ES system barely converges globally, the recursion's informative value is limited due to its intrinsic uncertainty. Nonetheless, for clearer cases, the recursion displays the actual LD very well. 
	
	Fig. \ref{xkT1D} (bottom) shows $L_\omega(x)$ generated by the simulated $x(kT)_\mathrm{Sim}$, where \eqref{system_example} was treated with $F_2(x)$. The function $L_\omega(x)$ was numerically integrated using the Euler method from \eqref{grad_descent} with a normalized scaling. This example illustrates the property that $L_\omega(x)$ can become convex for certain $\omega$ although $F_2(x)$ is not. It can be observed that for higher $T$, $L_\omega(x)$ is a convex function, whereas for small $T$, it shows a local minimum similar to $F_2(x)$. Consequently, when starting to the right of this local minimum, $x(kT)_\mathrm{Sim}$ does not converge near the global minimizer of $F_2(x)$.
\end{example}
\begin{figure}[h]
	\centering
	\input{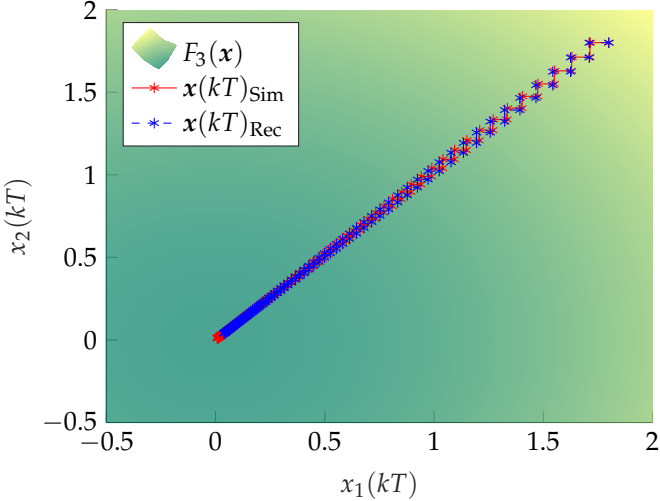}
	\caption{The two-dimensional LD using $F_3(\vec{x})$ and $T=0.01$. The value of $F_3(\vec{x})$ is color-coded from yellow (high) to green (low), $a=10$.}
	\label{xkT2D}
\end{figure}
\begin{example}
	The third example is devoted to the two-dimensional extension of \eqref{system_example}, where the one-dimensional sine-waves are applied sequentially according to \eqref{u_mult}. The initial condition $\vec{x}(0) = [1.8,1.8]^\top$ and the convex function $F_3(\vec{x})=\frac{1}{2}(x_1^2+x_2^2)$ are used.
	In Fig. \ref{xkT2D}, simulation results of this setup are depicted. The staircase-shaped LD, that were predicted by \eqref{xnTSI}, are clearly visible. Over the first few time periods, the recursion gives a very precise approximation of the LD. However, the approximation error sums up such that the recursion becomes less and less precise over time. Note that this observation does not disagree with our main results. Nonetheless, the recursion converges near the minimizer at the origin, just like the simulated LD.
\end{example}

\section{Summary} \label{conclusion}

In this paper, we gave an explicit recursion for the LD of scalar ES systems with static maps. This recursion approximately quantifies the gradient information recovered by ES, and reveals that it corresponds to an averaged gradient of the objective function. As this property holds without strong restrictions on the objective function, the recursion is also able to represent convergence of the LD to the global minimum, despite the presence of local minima. Furthermore, we presented a special multidimensional dither sequence and showed that an ES system, treated with this sequence, moves along an averaged gradient as well. Eventually, we illustrated and verified our results in simulations. Since in general, dynamic maps are considered in ES, an extension of the presented analysis to this case seems worthwhile.

\section{Appendix}

\subsection{Proof of Theorem \ref{TheoremGU}} \label{AppProof1}

\begin{proof}

The proof relies on the principles of variational calculus presented in Section \ref{methods}. We define
\begin{equation}
\bar{u}_2^{(j)}(t) = \begin{cases} \bar{u}_2(t) & t\in[0,(j-1)\epsilon) \\
0 & t\notin [0,(j-1)\epsilon)
\end{cases}, \; j=1,\ldots,2N+1,
\end{equation}
consisting of the first $(j-1)$ needles and denote
\begin{itemize}
	\item  $x^{(j)}(t)$ the solution of \eqref{system} when $u_1(t)$ and $\bar{u}_2^{(j)}(t)$ are applied,
	\item  $v^{(j)}(t)$ the variational variable which describes the variation of $x^{(j+1)}(t)$ from $x^{(j)}(t)$.
\end{itemize}
Consider system \eqref{system}, where $\bar{u}_2^{(j)}(t)$ is applied, and its solution $x^{(j)}(t)$. Now apply $\bar{u}_2^{(j+1)}(t)$. Then, $\bar{u}_2^{(j)}(t)$ corresponds to the nominal input that is perturbed on \mbox{$t\in[(j-1)\epsilon,j\epsilon)$}, and $x^{(j)}(t)$ to the nominal trajectory, such that
\begin{equation}
x^{(j+1)}(t)=x^{(j)}(t)+\epsilon v^{(j)}(t)+\mathcal{O}(\epsilon^2). \label{xvarproof}
\end{equation}
The variational variable $v^{(j)}(t)$ evolves as
\begin{equation}
\dot{v}^{(j)}(t) = u_1(t)\frac{\partial g_1}{\partial F}(F(x^{(j)}(t)))\frac{\partial F}{\partial x}(x^{(j)}(t)) v^{(j)}(t) \label{vareqproof}
\end{equation}
and has initial condition \mbox{$v^{(j)}(j\epsilon) = g_2(F(x^{(j)}(j\epsilon)))u_2(j\epsilon)$}, since $\bar{u}_2^{(j)}(j\epsilon)=0$.

The proof is subdivided into four sections. In the first part, we will derive a representation of the solution of \eqref{system} after one dither period $T$ via variational calculus, where the terms linear in $\epsilon$ will be comprised of STMs. Second, we will derive a useful symmetric property of these STMs. Using this, the terms of order $\mathcal{O}(\epsilon)$ will be expressed as an integral to prove \eqref{xTNfinGU} in the third part. In the last part, we let the length of the needles tend to zero to show \eqref{xTNinfGU}.

\textbf{1):} We begin to show by induction that when $\bar{u}_2^{(j+1)}(t)$ is applied to the system, the solution of \eqref{system} is
\begin{align}
x^{(j+1)}(t) &= x^*(t) + \epsilon \sum_{i=1}^{j} \Phi(t,i\epsilon) g_2(F(x^*(i\epsilon)))u_2(i\epsilon) \nonumber \\
&+\sum_{i=1}^{j} \Big(R_i(\epsilon)+S_i(\epsilon)\Big), \label{xjinductionGU}
\end{align}
for $t\in[j\epsilon,T]$, where each of the remainders $R_i(\epsilon)=\mathcal{O}(\epsilon^2)$ and each $S_i(\epsilon)=\mathcal{O}(2(i-1)\epsilon^2)$.

\textit{Step 1: Basis} By the principles of variational calculus presented in Section \ref{methods}, applying $\bar{u}_2^{(2)}(t)$, we obtain for the solution of \eqref{system}
\begin{equation}
x^{(2)}(t) = \underbrace{x^{(1)}(t)}_{=x^*(t)} + \epsilon v^{(1)}(t) + R_1(\epsilon), 
\end{equation}
for $t\in [\epsilon,T]$. For the remainder it holds $R_1(\epsilon)=\mathcal{O}(\epsilon^2)$. As $x^{(1)}(t)=x^*(t)$, the variational variable $v^{(1)}(t)$ fulfills \eqref{vareqGU} with initial condition $v^{(1)}(\epsilon)=g_2(F(x^*(\epsilon)))u_2(\epsilon)$. Since $\Phi(t,t_0)$ is the STM of \eqref{vareqGU}, we can express $x^{(2)}(t)$ by
\begin{equation}
x^{(2)}(t) = x^*(t) + \epsilon\Phi(t,\epsilon)g_2(F(x^*(\epsilon)))u_2(\epsilon) + R_1(\epsilon), \label{xepsilon}
\end{equation}
for $t\in [\epsilon,T]$, which is \eqref{xjinductionGU} for $j=2$.

\textit{Step 2: Inductive Step} Assume that \eqref{xjinductionGU} holds when $\bar{u}_2^{(j)}(t)$ is applied, i.e., that the solution $x^{(j)}(t)$ reads
\begin{align}
x^{(j)}(t) &= x^*(t) + \epsilon \sum_{i=1}^{j-1} \Phi(t,i\epsilon) g_2(F(x^*(i\epsilon)))u_2(i\epsilon) \nonumber \\
&+\sum_{i=1}^{j-1}\Big(R_i(\epsilon)+S_i(\epsilon)\Big), \label{xjinductionGU2}
\end{align}
for $t\in[(j-1)\epsilon,T]$. Consider $v^{(j)*}(t)$, which fulfills \eqref{vareqGU} and has initial conditions $v^{(j)*}(j\epsilon) = g_2(F(x^*(j\epsilon)))u_2(j\epsilon)$. Next, we apply $\bar{u}_2^{(j+1)}(t)$. With Lemma \ref{LemmaMultNeed} (see Appendix \ref{AppLemma1}), equation \eqref{xvarproof} becomes 
\begin{align}
&x^{(j+1)}(t) = x^{(j)}(t) + \epsilon v^{(j)*}(t) + \underbrace{\epsilon v_R^{(j)}(\epsilon)}_{\eqqcolon S_j(\epsilon)} + R_j(\epsilon) \nonumber \\
&\stackrel{\eqref{xjinductionGU2}}{=} x^*(t) + \epsilon \sum_{i=1}^{j-1} \Phi(t,i\epsilon) g_2(F(x^*(i\epsilon)))u_2(i\epsilon)+\epsilon v^{(j)*}(t) \nonumber \\
&+\sum_{i=1}^{j-1}\Big(R_i(\epsilon)+S_i(\epsilon)\Big)+R_j(\epsilon)+S_j(\epsilon),
\end{align}
for $t \in[j\epsilon,T]$. Again, it holds that $R_j(\epsilon)=\mathcal{O}(\epsilon^2)$. Note that $x^{(j)}(t) = x^*(t) + \mathcal{O}((j-1)\epsilon)$, because according to \eqref{xjinductionGU}, each of the summands in the first sum is $\mathcal{O}(\epsilon)$. With Lemma \ref{LemmaMultNeed}, it holds that \mbox{$v_R^{(j)}(\epsilon)=\mathcal{O}(2(j-1)\epsilon)$} and thus \mbox{$S_j(\epsilon) = \epsilon v_R^{(j)}(\epsilon)=\mathcal{O}(2(j-1)\epsilon^2)$}. Using the STM of \eqref{vareqGU}, we obtain
\begin{align}
x^{(j+1)}(t) &= x^*(t) + \epsilon \sum_{i=1}^{j}  \Phi(t,i\epsilon) g_2(F(x^*(i\epsilon)))u_2(i\epsilon) \nonumber \\
&+\sum_{i=1}^{j}\Big(R_i(\epsilon)+S_i(\epsilon)\Big),
\end{align}
for $t\in[j\epsilon,T]$, which is exactly \eqref{xjinductionGU}.

Using \eqref{xjinductionGU}, we can express the solution of \eqref{system} at $t=T$, when $\bar{u}_2^{(2N+1)}(t)=\bar{u}_2(t)$ was applied to the system, as
\begin{align}
x(T) &= x^{(2N+1)}(T) \nonumber \\
&= x^*(T) + \epsilon \sum_{i=1}^{2N} \Phi(T,i\epsilon) g_2(F(x^*(i\epsilon)))u_2(i\epsilon) \nonumber \\
&+\sum_{i=1}^{2N}\Big(R_i(\epsilon)+S_i(\epsilon)\Big). \label{xTGU1}
\end{align}

In the next step, we will give a precise assessment of the remainder terms in \eqref{xTGU1}. We have established that the individual terms can be estimated as $R_i(\epsilon)=\mathcal{O}(\epsilon^2)$ and \mbox{$S_i(\epsilon)=\mathcal{O}(2(i-1)\epsilon^2)$}. Taking the summation of the remainder terms into account, the overall remainder is assessed as
\begin{align}
\sum_{i=1}^{2N} \Big(R_i(\epsilon) + S_i(\epsilon)\Big) &= \mathcal{O}(\big(2N+\sum_{i=1}^{2N}2(i-1)\big)\epsilon^2) \nonumber \\
&= \mathcal{O}((2N)^2\epsilon^2) \stackrel{\epsilon = \frac{T}{2N}}{=} \mathcal{O}(T^2).
\end{align}

Recall that because of \textit{A3} and the even number of needles, the sampled dither $\bar{u}_2(t)$ is comprised of ``needle pairs'' of same amplitude, but opposite sign. In the following, we will formalize this idea and carve out its effect on $x(T)$. First we split the sum in \eqref{xTGU1}
\begin{align}
x(T) &= x^*(T) + \epsilon \sum_{i=1}^{N} \Phi(T,i\epsilon) g_2(F(x^*(i\epsilon)))u_2(i\epsilon) \nonumber \\
&+\epsilon \sum_{i=N}^{2N} \Phi(T,i\epsilon) g_2(F(x^*(i\epsilon)))u_2(i\epsilon)+\mathcal{O}(T^2),
\end{align}
then use that $u_2(i\epsilon)=-u_2(i\epsilon-\frac{T}{2})$ in the second sum due to \textit{A3}, and perform an index shift to obtain
\begin{align}
x(T) &= x^*(T) + \epsilon \sum_{i=1}^{N} u_2(i\epsilon)\Big( \Phi(T,i\epsilon) g_2(F(x^*(i\epsilon))) \nonumber \\
&- \Phi(T,\tfrac{T}{2}+i\epsilon) g_2(F(x^*(\tfrac{T}{2}+i\epsilon))) \Big) +\mathcal{O}(T^2). \label{xTGU2}
\end{align}

\textbf{2):} The second part of the proof is dedicated to finding a simpler form of the STMs occurring in \eqref{xTGU2}. Because of \textit{A2}, it holds for the dither that $u_1(t)=-u_1(T-t)$. As a consequence, the nominal solution goes along the vector field $g_1(F(x^*(t)))u_1(t)$ in $[0,\frac{T}{2}]$, and back along the same vector field in $[\frac{T}{2},T]$. Then the nominal solution fulfills
\begin{equation}
x^*(t)=x^*(T-t). \label{xsymmetryGU}
\end{equation}
Using this symmetry property and $u_1(t)=-u_1(T-t)$ on the definition of $\Phi(t,t_0)$ yields
\begin{align}
&\Phi(t,t_0) =\exp\left( \int_{t_0}^{t} u_1(\tau)\frac{\partial g_1}{\partial F}(F(x^*(\tau))) \frac{\partial F}{\partial x}(x^*(\tau)) \mathrm{d}\tau \right) \nonumber \\
&\stackrel{\phantom{s\coloneqq T-\tau}}{=} \exp\left( \int_{T-t_0}^{T-t} u_1(s)\frac{\partial g_1}{\partial F}(F(x^*(s))) \frac{\partial F}{\partial x}(x^*(s)) \mathrm{d}s \right) \nonumber \\
&\stackrel{\phantom{s\coloneqq T-\tau}}{=} \Phi(T-t,T-t_0). \label{PhisymmetryGU}
\end{align}
The relation \eqref{PhisymmetryGU} and the semi-group property of the STM is used to establish that, if $0\le i\epsilon\le\frac{T}{2}$,
\begin{align}
\Phi(T,i\epsilon)&\stackrel{\phantom{\eqref{PhisymmetryGU}}}{=}\Phi(T,\tfrac{T}{2})\Phi(\tfrac{T}{2},i\epsilon) \nonumber \\
&\stackrel{\eqref{PhisymmetryGU}}{=} \Phi(0,\tfrac{T}{2})\Phi(\tfrac{T}{2},i\epsilon) = \Phi(0,i\epsilon) \label{PhiGU1}
\end{align}
and
\begin{equation}
\Phi(T,\tfrac{T}{2}+i\epsilon)\stackrel{\eqref{PhisymmetryGU}}{=} \Phi(0,\tfrac{T}{2}-i\epsilon). \label{PhiGU2}
\end{equation}

\textbf{3):} In the third section of the proof, we combine the results of the first two sections to come up with \eqref{xTNfinGU}. We use the identities \eqref{xsymmetryGU}, \eqref{PhiGU1} and \eqref{PhiGU2} in \eqref{xTGU2} to write
\begin{align}
&x(T) = \underbrace{x^*(T)}_{\stackrel{\eqref{xsymmetryGU}}{=}x^*(0)=x_0}+\mathcal{O}(T^2)  \nonumber \\
&+ \epsilon \sum_{i=1}^{N} u_2(i\epsilon)\bigg( \underbrace{\Phi(T,i\epsilon)}_{\stackrel{\eqref{PhiGU1}}{=}\Phi(0,i\epsilon)} g_2(F(x^*(i\epsilon))) \nonumber \\
&\hspace{67pt}- \underbrace{\Phi(T,\tfrac{T}{2}+i\epsilon)}_{\stackrel{\eqref{PhiGU2}}{=}\Phi(0,\tfrac{T}{2}-i\epsilon)} g_2(F(\underbrace{x^*(\tfrac{T}{2}+i\epsilon)}_{\stackrel{\eqref{xsymmetryGU}}{=}x^*(\tfrac{T}{2}-i\epsilon)})) \bigg) \nonumber \\
&= x_0 + \mathcal{O}(T^2) \label{xTGU4} \\
&- \epsilon\sum_{i=1}^{N} u_2(i\epsilon)\int\displaylimits_{i\epsilon}^{\frac{T}{2}-i\epsilon} \frac{\mathrm{d}}{\mathrm{d} \tau}\bigg(\Phi(0,\tau)g_2(F(x^*(\tau)))\bigg)\mathrm{d}\tau. \nonumber
\end{align}
Note that writing the term linear in $\epsilon$ as an integral was only possible as there existed ``needle pairs'' of opposite sign. Next, we concentrate on simplifying the integral. Using the product rule, the differentiation property of the STM, and the chain rule, the integral becomes
\begin{align}
&\hspace{-3pt}\int\displaylimits_{i\epsilon}^{\frac{T}{2}-i\epsilon} \hspace{-6pt}  \Phi(0,\tau) \frac{\mathrm{d} g_2}{\mathrm{d} \tau}(F(x^*(\tau))) \hspace{-2pt} + \hspace{-2pt} \frac{\mathrm{d}}{\mathrm{d} \tau}\big(\Phi(0,\tau)\big)g_2(F(x^*(\tau))) \mathrm{d}\tau \nonumber \\
&=\int\displaylimits_{i\epsilon}^{\frac{T}{2}-i\epsilon} \Phi(0,\tau) \frac{\partial g_2}{\partial F}(F(x^*(\tau)))\frac{\partial F}{\partial x}(x^*(\tau))\dot{x}^*(\tau)  \\
& \hspace{0pt} -\Phi(0,\hspace{-1pt}\tau) u_1(\tau)\hspace{-1pt} \frac{\partial g_1}{\partial F}\hspace{-1pt}(F(x^*(\tau)\hspace{-1pt})\hspace{-1pt})\frac{\partial F}{\partial x}(x^*(\tau)\hspace{-1pt}) g_2(F(x^*(\tau)\hspace{-1pt})\hspace{-1pt})  \mathrm{d}\tau. \nonumber 
\end{align}
The term $\dot{x}^*(\tau)$ is the right-hand side of the nominal differential equation \eqref{nomsystem_GU}. Using this, we can write the integral
\begin{equation}
\int\displaylimits_{\epsilon}^{\frac{T}{2}-\epsilon} \frac{\partial F}{\partial x}(x^*(\tau))\Phi(0,\tau)u_1(\tau)\underbrace{\bigg(\frac{\partial g_2}{\partial F}g_1-\frac{\partial g_1}{\partial F}g_2\bigg)}_{\stackrel{\eqref{LieBr}}{=}-g_0(F(x^*(\tau)))} \mathrm{d}\tau, \label{int_final}
\end{equation}
omitting some arguments. Using \eqref{int_final} in \eqref{xTGU4} proves \eqref{xTNfinGU}.

\textbf{4):} In the fourth part of the proof, we perform the limit process of letting $N$ tend to infinity. We define
\begin{equation}
t_i \coloneqq i\epsilon
\end{equation}
to express the limit of the first-order term in \eqref{xTNfinGU} as
\begin{align}
\lim_{N\rightarrow \infty} \sum_{i=1}^{N} &\bigg( (t_i-t_{i-1}) u_2(t_i) \label{riemann}\\
&\cdot\int\displaylimits_{t_i}^{\frac{T}{2}-t_i} \frac{\partial F}{\partial x}(x^*(\tau)) \Phi(0,\tau) u_1(\tau) g_0(F(x^*(\tau))) \mathrm{d}\tau \bigg). \nonumber
\end{align}
Note that as $u_2$ is continuous and bounded due to \textit{A1}, this is the limit of a Riemann sum \cite{Trench03} with partition $\{t_i\}$, which converges to the Riemann integral. Therefore, the first-order term \eqref{riemann} becomes
\begin{equation}
\int\displaylimits_{0}^{\frac{T}{2}} u_2(t) \int\displaylimits_{t}^{\frac{T}{2}-t} \frac{\partial F}{\partial x}(x^*(\tau)) \Phi(0,\tau) u_1(\tau) g_0(F(x^*(\tau))) \mathrm{d}\tau \mathrm{d}t,
\end{equation}
which gives \eqref{xTNinfGU} and completes the proof.

\end{proof}

\subsection{Auxiliary Lemma \ref{LemmaMultNeed}} \label{AppLemma1}

\begin{lemma}
	Consider $v^{(j)}(t)$ and $v^{(j)*}(t)$ as defined in the proof of Theorem \ref{TheoremGU}. Suppose that \mbox{$x^{(j)}(t)=x^*(t)+\mathcal{O}(i\epsilon)$}. Then,
	\begin{equation}
	v^{(j)}(t) = v^{(j)*}(t) + v^{(j)}_R(\epsilon),
	\end{equation}
	where it holds for the remainder $v^{(j)}_R(\epsilon)=\mathcal{O}(2i\epsilon)$. \label{LemmaMultNeed}
\end{lemma}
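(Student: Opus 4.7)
The plan is to trace the difference $v^{(j)}(t) - v^{(j)*}(t)$ back to two independent $\mathcal{O}(i\epsilon)$-perturbations. Both variables solve a linear, time-varying ODE of the form in \eqref{vareqGU}/\eqref{vareqproof} on $[j\epsilon, T]$, but $v^{(j)}$ uses coefficients and initial data built from $x^{(j)}(t)$, whereas $v^{(j)*}$ uses the same objects built from the nominal trajectory $x^*(t)$. By hypothesis $x^{(j)}(t) = x^*(t) + \mathcal{O}(i\epsilon)$, so the coefficient and the initial condition of $v^{(j)}$ are each an $\mathcal{O}(i\epsilon)$-perturbation of those of $v^{(j)*}$.

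First, I would handle the initial condition: since $g_2 \circ F \in \mathcal{C}^1$, a first-order Taylor expansion about $x^*(j\epsilon)$, together with boundedness of $u_2$ from \textit{A1}, yields $v^{(j)}(j\epsilon) - v^{(j)*}(j\epsilon) = \mathcal{O}(i\epsilon)$. Second, I would compare the dynamics: writing the variational equations as $\dot{v}^{(j)} = A^{(j)}(t)\, v^{(j)}$ and $\dot{v}^{(j)*} = A(t)\, v^{(j)*}$, smoothness of $g_1, F$ together with the $\mathcal{O}(i\epsilon)$ deviation of their arguments gives $A^{(j)}(t) - A(t) = \mathcal{O}(i\epsilon)$ pointwise. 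A standard Gronwall-type estimate on the bounded interval $[j\epsilon, T]$, exploiting boundedness of $u_1$ from \textit{A1}, then shows that the associated STMs satisfy $\Phi^{(j)}(t, j\epsilon) = \Phi(t, j\epsilon) + \mathcal{O}(i\epsilon)$.

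Combining the two estimates produces $v^{(j)}(t) = v^{(j)*}(t) + v_R^{(j)}(\epsilon)$, where $v_R^{(j)}(\epsilon)$ is the sum of two additive $\mathcal{O}(i\epsilon)$ contributions: one from the perturbed STM transporting the nominal initial value, the other from the perturbed initial value transported by the perturbed STM. By the additivity rule for Landau symbols recalled in the Notation section, this sum is $\mathcal{O}(2i\epsilon)$. The main obstacle is bookkeeping rather than analysis: one must keep the two independent error sources separate so that the factor of $2$ in the bound is faithful to the paper's convention — the same convention that, after summation across all $2N$ needles in the proof of Theorem \ref{TheoremGU}, yields the clean $\mathcal{O}(T^2)$ remainder.
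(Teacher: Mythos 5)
Your proposal is correct and follows essentially the same route as the paper: Taylor-expand the coefficient of the variational equation and its initial condition about $x^*$ using $x^{(j)}=x^*+\mathcal{O}(i\epsilon)$, and combine the two resulting $\mathcal{O}(i\epsilon)$ perturbations via the additivity convention for Landau symbols to obtain $\mathcal{O}(2i\epsilon)$. The only technical difference is that where you invoke a Gronwall-type estimate to propagate the coefficient perturbation into the state-transition matrix, the paper exploits the scalar explicit formula $\Phi(t,t_0)=\exp\left(\int_{t_0}^{t}A(\tau)\,\mathrm{d}\tau\right)$ and Taylor-expands the exponential of the $\mathcal{O}(i\epsilon)$ integral directly, which is an equivalent but more elementary step in this one-dimensional setting.
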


\begin{proof}
	We perform a Taylor expansion of the system matrix of \eqref{vareqproof}, and the initial condition of $v^{(j)}(t)$ about $x^*$, where we use that $x^{(j)}=x^*+\mathcal{O}(i\epsilon)$. This yields
	\begin{align}
	A(x^{(j)},t) &= A(x^*+\mathcal{O}(i\epsilon),t) = A(x^*,t) + \mathcal{O}(i\epsilon), \\
	g_2(x^{(j)}) &= g_2(x^*+\mathcal{O}(i\epsilon)) = g_2(x^*) + \mathcal{O}(i\epsilon).
	\end{align}
	As a result, the solution of the variational equation is
	\begin{align}
	v^{(j)}(t) = \exp\bigg(\int_{t_0}^t &A(x^*(\tau),\tau) + \mathcal{O}(i\epsilon) \mathrm{d}\tau \bigg) \nonumber\\
	&\cdot\Big(g_2(x^*(j\epsilon))+ \mathcal{O}(i\epsilon)\Big) u_2(j\epsilon).
	\end{align}
	Using the Taylor expansion for $\exp\left(\int_{t_0}^t \mathcal{O}(i\epsilon) \mathrm{d}\tau \right)$ gives
	\begin{align}
	&v^{(j)}(t) = \exp\left(\int_{t_0}^t A(x^*(\tau),\tau)\mathrm{d}\tau \right)(1+\mathcal{O}(i\epsilon)) \nonumber\\
	&\hspace{80pt} \cdot\Big(g_2(x^*(j\epsilon))+\mathcal{O}(i\epsilon)\Big) u_2(j\epsilon) \nonumber \\
	&=\exp\left(\int_{t_0}^t A(x^*(\tau),\tau)\mathrm{d}\tau\right) g_2(x^*(j\epsilon))u_2(j\epsilon) + \mathcal{O}(2i\epsilon).
	\end{align}
	We note that the first term corresponds to $v^{(j)*}(t)$ and the second remainder term fulfills $v^{(j)}_R(\epsilon)=\mathcal{O}(2i\epsilon)$.
\end{proof}

\bibliographystyle{IEEEtran}
\bibliography{BibliographyMA}

\end{document}